\newtheorem{theorem}{Theorem}
\newcommand{\bG}{{\mathbf G}}
\newcommand{\bH}{{\mathbf H}}
\newcommand{\bv}{{\mathbf v}}
\newcommand{\GF}[1] {{\rm GF} ( #1)}
\newcommand{\EG}[2] {{\rm EG} ( #1,#2)}
\newcommand{\PG}[2] {{\rm PG} ( #1,#2)}
\def\bbC{\mathbb{C}}
\def\bbZ{\mathbb{Z}}
\def\b0{\mathbf{0}}
\def\cC{{\cal C}}
\def\cG{{\cal G}}
\def\cS{{\cal S}}        \def\cT{{\cal T}}
\begin{document}
\title{High performance entanglement-assisted quantum LDPC codes need little entanglement}
\author{Min-Hsiu Hsieh, Wen-Tai Yen, and Li-Yi Hsu\thanks{Min-Hsiu Hsieh is with the ERATO-SORST
Quantum Computation and Information Project, Japan Science and Technology
Agency 5-28-3, Hongo, Bunkyo-ku, Tokyo, Japan. Wen-Tai Yen, and Li-Yi Hsu
conducted this research with Department of Physics, Chung Yuan Christian
University, Chungli, Taiwan.}}

%

\maketitle

\begin{abstract}
Though the entanglement-assisted formalism provides a universal connection between a
classical linear code and an entanglement-assisted quantum error-correcting code (EAQECC), the issue of maintaining large amount of pure maximally entangled states in constructing EAQECCs is a practical obstacle to its use. It is also conjectured that the power of entanglement-assisted formalism to convert those good classical codes comes from massive consumption of maximally entangled states. We show that the above conjecture is wrong by providing families of EAQECCs with an entanglement consumption rate that diminishes linearly as a function of the code length. Notably, two families of EAQECCs constructed in the paper require only one copy of maximally entangled state no matter how large the code length is. These families of EAQECCs that are constructed from classical finite geometric LDPC codes perform very well according to our numerical simulations. Our work indicates that EAQECCs are not only theoretically interesting, but also physically implementable. Finally, these high performance entanglement-assisted LDPC codes with low entanglement consumption rates allow one to construct high-performance standard QECCs with very similar parameters.

\end{abstract}

\begin{IEEEkeywords}Low density parity check codes, Euclidean geometry,
projective geometry, cyclic code, stabilizer code, entanglement-assisted code,
and quasi-cyclic code\end{IEEEkeywords}

\section{Introduction}
The goal of coding theory is to design families of codes with transmission rate
approaching the channel capacity \cite{Shannon48}, while the error probability
of the transmitted message is arbitrarily small. Practical encoding and
decoding implementation is also desirable. Originally, Shannon employed
nonconstructive random codes with no practical encoding and decoding algorithm.
It is not surprising that most of the families of the constructed codes so far
do not satisfy both of the requirements. Exceptions are the low-density
parity-check (LDPC) codes \cite{RG63thesis} and Turbo codes \cite{BGT93}.

The LDPC code was first proposed in 1963 \cite{RG63thesis} that was much
earlier than the formation of modern coding theory. Not until the early 90's,
the LDPC code was rediscovered as family of sparse codes \cite{mackay99good},
and was shown to have capacity-approaching performance while the complexity of
implementing encoding and decoding algorithms is relatively low. A
$(J,L)$-regular LDPC code is defined to be the null space of a binary parity
check matrix $H$ with the following properties: (1) each column consists of $J$
``ones''; (2) each row consists of $L$ ``ones''; (3) both $J$ and $L$ are small
compared to the length of the code $n$ and the number of rows in $H$.

There are several methods of constructing good families of regular LDPC codes
\cite{mackay99good,KLF01,Fossorier04}. Among them, the LDPC codes that are
constructed from finite geometry have the following advantages: (1) they have
good minimum distance; (2) the girth of these codes is at least 6; (3) they
perform very well with iterative decoding, only a few tenths of a dB away from
the Shannon theoretical limit; (4) they can be put into either cyclic or
quasi-cyclic form. Consequently, their encoding can be achieved in linear time
and implemented with a single feedback shift register; (5) they can be extended
or shortened in various ways to obtain other good LDPC codes \cite{KLF01}.


The connection between classical linear codes and the quantum codes is unified
by the entanglement-assisted coding theory \cite{BDH06IEEE,BDH06,Hsieh08}.
Every classical linear code can be used to construct the corresponding quantum
code with the help of a certain amount of pre-shared entanglement. When the
classical code is self-dual, the resulting EAQECC is equivalent to a stabilizer
code \cite{DG97thesis}. Furthermore, the entanglement-assisted formalism
preserves the minimum distance property of the classical code---large minimum
distance classical code results in an entanglement-assisted quantum
error-correcting code (EAQECC) with the same minimum distance. However, it
is conjectured that the power of importing those good classical codes comes
from massive consumption of ebtis (maximally entangled states). The EAQECCs
constructed in Ref. \cite{HBD08QLDPC} seem to support the above conjecture since the ebits required grow with the code length n, and induce new criticism that EAQECCs
are of no practical use since maintaining so many noiseless ebits in EAQECCs
is extremely difficult.

In this paper, we show that the above conjecture is wrong by constructing families of EAQECCs from two types of finite geometries: the Euclidean geometry and the projective geometry. Moreover, we show that the pre-shared entanglement that are required  decreases linearly with respect to the length of the code. Notably, two families of EAQECCs constructed in the paper require only one copy of maximally entangled state no matter how large the code length is. We evaluate their
block error probability performance over the depolarizing channel when decoding
with the sum-product algorithm (SPA). These families of EAQECCs perform very well according to our numerical simulations.
Our work indicates that EAQECCs are no longer infeasible, and become a strong candidate for practical applications.

This paper is organized as follows.  In Section~\ref{II}, we first introduce
the Euclidean geometry and the projective geometry. Then, we discuss several
properties of these two finite geometries and show how to construct classical
finite geometric LDPC (FG-LDPC) codes. In Section~\ref{III}, we first review the definitions of standard QECCs and entanglement-assisted QECCs. We then construct
EAQECCs from classical FG-LDPC codes. Specifically, we construct several
EAQECCs that require an arbitrarily small amount of entanglement. We also show
that we can construct high performance standard QECCs with the catalytic
construction. In Section~\ref{IV}, we compare the performance of the EAQECCs
constructed from classical FG-LDPC codes with the known results in the
literature. In section~\ref{V}, we conclude.

\section{Finite geometry and finite geometry LDPC codes}\label{II}
In this section, we give definitions of finite geometries and show how to
construct classical FG-LDPC codes. Ref.~\cite{KLF01} contains an excellent
introduction of the Euclidean and projective geometries.

A finite geometry $\bG$ with $n$ points and $m$ lines is said to have the
following fundamental structural properties: (1) every line consists of $L$
points; (2) any two points are connected by one and only one line; (3) Every
point is intersected by $J$ lines; (4) two lines are either parallel or they
intersect at one and only one point. There are two families of finite
geometries which have the above properties, the Euclidean and projective
geometries over finite fields.


\subsection{Euclidean geometry}
Let $\EG{p}{q}$ be a $p$-dimensional Euclidean geometry over the Galois field
$\GF{q}$ where $p$ and $q$ are two positive integers. This geometry consists of
$q^p$ points, and each point can be represented by a $p$-tuple over $\GF{q}$.
The all-zero $p$-tuple $\b0=(0,0,\cdots,0)$ is called the origin. In other
words, all the points in ${\rm EG}(p,q)$ form a $p$-dimensional vector space
over $\GF{q}$. A line in $\EG{p}{q}$ can be viewed as a one-dimensional
subspace of $\EG{p}{q}$ or a coset of it. Therefore, a line in $\EG{p}{q}$
consists of $q$ points. Furthermore, the Euclidean geometry has the following
properties: (1) there are $q^{p-1}(q^{p}-1)/(q-1)$ lines; (2) for any point in
$\EG{p}{q}$, there are $(q^p-1)/(q-1)$ lines intersecting it; (3) every line
has $q^{p-1}-1$ lines parallel to it.

\subsubsection{Type-I EG-LDPC}
To show how to construct a binary parity check matrix using the Euclidean
geometry, we need a few definitions. Let $\GF{q^p}$ be the {\it extension
field} of $\GF{q}$. Then every point in $\EG{p}{q}$ is an element of the Galois
field $\GF{q^p}$, henceforth $\GF{q^p}$ can be regarded as the Euclidean
geometry $\EG{p}{q}$. Let $\alpha$ be a primitive element of $\GF{q^p}$. Then
$0,1,\alpha,\alpha^2,\cdots,\alpha^{q^p-2}$ can be mapped to each of the $q^p$
points in $\EG{p}{q}$.

Let $\bH_{\EG{p}{q}}^{(1)}$ be the binary matrix whose rows are the incidence
vectors of all the lines in $\EG{p}{q}$ that do not pass through the origin and
whose columns are the $q^p-1$ non-origin points. The columns are arranged in
the order of $1,\alpha,\alpha^2,\cdots,\alpha^{q^p-2}$, i.e., the $(i+1)$-th
column corresponds to the point $\alpha^i$. Then $\bH_{\EG{p}{q}}^{(1)}$ has
$n=q^p-1$ columns and $m=(q^{p-1}-1)(q^p-1)/(q-1)$ rows. To sum up, the binary
matrix $\bH_{\EG{p}{q}}^{(1)}$ has the following structural properties: (1)
each row has weight $L=q$. This correspondence results from each line of
$\EG{p}{q}$ containing $q$ points; (2) each column has weight
$J=(q^p-1)/(q-1)-1$. This correspondence results from the fact that each point
has $(q^p-1)/(q-1)$ lines intersecting at this point, but one of them passes
through the origin; (3) any two columns have at most one nonzero element in
common; (4) any two rows have at most one nonzero element in common; (5) the
density of $\bH_{\EG{p}{q}}^{(1)}$ is
$$\frac{L}{n}=\frac{J}{m}=\frac{q}{q^p-1}.$$ We can make the density smaller
by picking larger $p$ and $q$; (6) when $q$ is even, the minimum distance of the code defined by
$\bH_{\EG{p}{q}}^{(1)}$ is at least $J+1$. This can be proved using the BCH-bound
\cite{PW72}.

To be more specific, suppose $\ell$ is a line not passing through $\b0$. We can
define the incidence vector of $\ell$ (the $\ell$-th row in
$\bH_{\EG{p}{q}}^{(1)}$) as
$$\bv_\ell=(v_1,v_2,\cdots,v_n),$$
where $v_i=1$ if the point $\alpha^i$ lies in the line $\ell$,
otherwise $v_i=0$. Clearly, $\alpha^k \ell$ is also a line in
$\EG{p}{q}$, for $k=0,1,n-1$, and $\alpha\bv_\ell$ is a right
cyclic-shift of $\bv_\ell$.

Consider the respective incidence vectors of lines $\ell_j$, $\alpha\ell_j$,
$\cdots$, $\alpha^{n-1}\ell_j$. We can construct a binary $n\times n$ matrix
$H_{j}$ from them as follows:
\begin{equation}
H_j\equiv\left(
\begin{array}{c}
\bv_{\ell_j} \\
\bv_{\alpha\ell_j} \\
\vdots \\
\bv_{\alpha^{n-1}\ell_j}
\end{array}%
\right).
\end{equation}%
Here $H_j$ is a circulant matrix with column and row weights equal to $q$.
Since the total number of lines in $\EG{p}{q}$ not passing through $\b0$ is
$(q^p-1)(q^{p-1}-1)/(q-1)$, we can partition these lines into
$(q^{p-1}-1)/(q-1)$ cyclic classes (each cyclic class is represented by a
binary $n\times n$ cyclic matrix $H_j$). Finally, we can construct
$\bH_{\EG{p}{q}}^{(1)}$ by
\begin{equation}
\bH_{\EG{p}{q}}^{(1)}=\left(
\begin{array}{c}
H_1 \\
H_2 \\
\vdots \\
H_{\frac{q^{p-1}-1}{q-1}}
\end{array}%
\right).
\end{equation}%
The null space of $\bH_{\EG{p}{q}}^{(1)}$ is a type-I
EG-LDPC code. Furthermore, type-I EG-LDPC codes are cyclic codes.

\subsubsection{Type-II EG-LDPC}
The type-II EG-LDPC is obtained by taking the transpose of
$\bH_{\EG{p}{q}}^{(1)}$:%
\begin{equation*}
\bH_{\EG{p}{q}}^{(2)}\equiv\left(\bH_{\EG{p}{q}}^{(1)}\right)^T
=\left(\begin{array}{cccc}
H_1^T & H_2^T & \cdots &
H_{\frac{q^{p-1}-1}{q-1}}^T\end{array}\right).
\end{equation*} %
The null space of $\bH_{\EG{p}{q}}^{(2)}$ is a type-II
EG-LDPC code. Clearly, type-II EG-LDPC codes are quasi-cyclic codes.

The binary matrix $\bH_{\EG{p}{q}}^{(2)}$ has the following structural
properties: (1) each column has weight $J=q$; (2) each row has weight
$L=(q^p-1)/(q-1)-1$; (3) any two columns have at most one nonzero element in
common; (4) any two rows have at most one nonzero element in common; (5) the
density of $\bH_{\EG{p}{q}}^{(2)}$ is
$$\frac{L}{n}=\frac{J}{m}=\frac{q}{q^p-1};$$
(6) when $q$ is even, the minimum distance of the code defined by $\bH_{\EG{p}{q}}^{(2)}$ is at least $J+1$.

\subsection{Projective geometry}
Let $\GF{q^{p+1}}$ be the {\it extension field} of $\GF{q}$, and let $\alpha$
be a primitive element of $\GF{q^{p+1}}$. Let $n=(q^{p+1}-1)/(q-1),$ and
$\beta=\alpha^n$. Then the order of $\beta$ is $q-1$, and
$\{0,1,\beta,\cdots,\beta^{q-2}\}$ form all elements of $\GF{q}$. Consider the
set $\{\alpha^0,\alpha^1,\cdots, \alpha^{n}\}$, and partition the nonzero
elements of $\GF{q^{p+1}}$ into $n$ disjoint subsets as follows:
$$(\alpha^j)=\{\alpha^j,\beta\alpha^j,\cdots,\beta^{q-2}\alpha^j\},$$
for $j=0,1,\cdots,n-1$. Therefore, for any
$\alpha^i\in\GF{q^{p+1}}$, if $\alpha^i=\beta^\ell\alpha^j$ with
$0\leq j<n$, then $\alpha^i$ is in the set $(\alpha^j)$.

If we represent each element in $\GF{q^{p+1}}$ as an $(p+1)$-tuple
over $\GF{q}$, then $(\alpha^j)$ consists of $q-1$ $(p+1)$-tuples
over $\GF{q}$.

Define $\PG{p}{q}$ to be a $p$-dimensional projective geometry over $\GF{q}$.
This geometry consists of $n=(q^{p+1}-1)/(q-1)$ points, and each point is
represented by $(\alpha^j)$, for $0\leq j<n$. In other words, these $q-1$
elements, $\{\alpha^j,\beta\alpha^j,\cdots,\beta^{q-2}\alpha^j\}$, of
$\GF{q^{p+1}}$ is considered as the same point in $\PG{p}{q}$. Therefore, these
points, $(\alpha^0),(\alpha^1),\cdots,(\alpha^{n})$, form a $p$-dimensional
projective geometry over $\GF{q}$. Note that a projective geometry does not
have a origin. The projective geometry has the following properties: (1) each
line in $\PG{p}{q}$ consists of $q+1$ points; (2) the number of lines in
$\PG{p}{q}$ that intersect at a given point is $(q^p-1)/(q-1)$; (3) there are
$$\frac{(1+q+\cdots+q^{p-1})(1+q+\cdots+q^p)}{q+1}$$ lines in $\PG{p}{q}$.

\subsubsection{Type-I PG-LDPC}
Let $\bH_{\PG{p}{q}}^{(1)}$ be the binary matrix whose rows are the incidence
vectors of all lines in $\PG{p}{q}$ and whose columns are all the points of
$\PG{p}{q}$. The columns are arranged in the following order:
$(\alpha^0),(\alpha),\cdots,(\alpha^{n-1})$. Then $\bH_{\PG{p}{q}}^{(1)}$ has
$n=(q^{p+1}-1)/(q-1)$ columns and
$m=(1+q+\cdots+q^{p-1})(1+q+\cdots+q^p)/(q+1)$ rows. To sum up, the binary
matrix $\bH_{\PG{p}{q}}^{(1)}$ has the following structural properties: (1)
each row has weight $L=q+1$. This correspondence results from each line in
$\PG{p}{q}$ containing  $q+1$ points; (2) each column has weight
$J=(q^p-1)/(q-1)$. This correspondence results from the fact that each point
has $(q^p-1)/(q-1)$ lines intersecting at these points; (3) any two columns have
at most one nonzero element in common; (4) any two rows have at most one
nonzero element in common; (5) the density of $\bH_{\PG{p}{q}}^{(1)}$ is
$$\frac{L}{n}=\frac{J}{m}=\frac{q^2-1}{q^{p+1}-1}.$$
We can make the density smaller by picking $p\geq2$; (6) when $p=2$ and $q=2^s$ for $s\geq2$, the minimum distance of the code defined by $\bH_{\PG{p}{q}}^{(1)}$ is $J+1$. This can be proved using BCH-bound \cite{PW72}. Similar to the type-I EG-LDPC, the type-I PG-LDPC code is also cyclic.

\subsubsection{Type-II PG-LDPC}
The type-II PG-LDPC is obtained by taking the transpose of
$\bH_{\PG{p}{q}}^{(1)}$:
\begin{equation}
\bH_{\PG{p}{q}}^{(2)}\equiv\left(\bH_{\PG{p}{q}}^{(1)}\right)^T
\end{equation} %
The null space of $\bH_{\PG{p}{q}}^{(2)}$ is called the type-II
PG-LDPC code. Clearly, type-II PG-LDPC codes are quasi-cyclic codes.

The binary matrix $\bH_{\PG{p}{q}}^{(2)}$ has the following structural
properties: (1) each column has weight $J=q+1$; (2) each row has weight
$L=\frac{q^p-1}{q-1}$; (3) any two columns have at most one nonzero element in
common; (4) any two rows have at most one nonzero element in common; (5) the
density of $\bH_{\PG{p}{q}}^{(2)}$ is
$$\frac{L}{n}=\frac{J}{m}=\frac{q^2-1}{q^{p+1}-1};$$
(6) when $q$ is even, the minimum distance of the code defined by $\bH_{\PG{p}{q}}^{(2)}$ is
$J+1$.

\section{Construction of EAQECCs}\label{III}
\subsection{Entanglement-assisted formalism}
Denote by $\Pi$ the set of Pauli matrices $I,X,Y,Z$, where
$$
I = \left[\begin{array}{cc} 1 & 0 \\ 0 & 1 \end{array}\right], \quad
X = \left[\begin{array}{cc} 0 & 1 \\ 1 & 0 \end{array}\right], \quad
$$
$$
Y = \left[\begin{array}{cc} 0 & -i \\ i & 0
\end{array}\right],  \quad
Z = \left[\begin{array}{cc} 1 & 0 \\ 0 & -1 \end{array}\right].
$$ Define an $n$-fold Pauli operator $A=A_1\otimes A_2\otimes\cdots\otimes A_n$, where $A_i\in\Pi$ for $i=1,\cdots,n$. Define $[A]=\{\alpha A:\alpha\in\bbC, |\alpha|=1\}$. Let $\cG_n=\{[A]\}$ be the collection of all $n$-fold Pauli operators up to the equivalent class $[A]$. Then $\cG_n$ forms a multiplicative group under the following multiplication $[A][B]=[AB]$, where $A$ and $B$ are $n$-fold Pauli operators.

An $[[n,k]]$ standard QECC $\cC$ is defined to be a subspace  of dimension $2^k$ in the Hilbert space $\bbC_2^{\otimes n}$, where $\bbC_2$ is a two-dimensional space over the complex field. If $\cC$ can be specified as the $+1$ eigenspace of a set of commuting operators in $\cG_n$, $\cC$ is also known as a stabilizer code \cite{DG97thesis}. The group $\cS$ generated by the set of commuting operators in $\cG_n$ is called the stabilizer group.

A symplectic vector $\bm{\alpha}=(\bm{x}|\bm{z})$ is a binary vector of length $2n$, where $\bm{x}=(x_1,\cdots,x_n)$ and $\bm{z}=(z_1,\cdots.z_n)$ are $n$-bit strings with $x_i,z_i\in\bbZ_2=\{0,1\}$. Each symplectic vector $\bm{\alpha}$ corresponds to an $n$-fold Pauli operator $N_{\bm{\alpha}}$ in $\cG_n$:
\[
N_{\bm{\alpha}}=[X^{\bm{x}}Z^{\bm{z}}],
\]
where $X^{\bm{x}}=X^{x_1}\otimes\cdots\otimes X^{x_n}$, and  $Z^{\bm{z}}=Z^{z_1}\otimes\cdots\otimes Z^{z_n}$.
Define the symplectic product $\odot$ of two symplectic vectors $\bm{\alpha}=(\bm{x}|\bm{z})$ and $\bm{\beta}=(\bm{x}'|\bm{z}')$ to be
\[
\bm{\alpha}\odot\bm{\beta}=\bm{x}\cdot\bm{z}'+\bm{z}\cdot\bm{x}'
\]
where $\cdot$ is the regular inner product of two binary vectors and $+$ is the addition operation in binary field. The symplectic product between two symplectic vectors $\bm{\alpha}$ and $\bm{\beta}$ characterizes the commutation relation between two $n$-fold Pauli operators $N_{\bm{\alpha}}$ and $N_{\bm{\beta}}$:
\[
N_{\bm{\alpha}}N_{\bm{\beta}}=(-1)^{\bm{\alpha}\odot\bm{\beta}}N_{\bm{\beta}}N_{\bm{\alpha}}.
\]
Therefore, the stabilizer group $\cS=\langle N_{\bm{\alpha}_1},\cdots,N_{\bm{\alpha}_{n-k}} \rangle$ used to define a standard QECC $\cC$ is equivalent to the subspace in $(\bbZ_2)^{2n}$ spanned by $\{\bm{\alpha}_1,\cdots,\bm{\alpha}_{n-k}\}$ where $\bm{\alpha}_i\odot\bm{\alpha}_j=0$, $\forall i,j=1,\cdots,n-k$.

On the other hand, if one can construct a set of symplectic vectors that are orthogonal with respect to the symplectic product, such a set of vectors can be used to construct a corresponding stabilizer group, and thus defines a standard QECC. The simplest way of doing this is to begin with an $[n,k]$ classical binary linear code whose parity check matrix $H$ satisfies $HH^T=0$, where $T$ denotes matrix transpose. Then it is easy to check that each row vector of the following matrix
\begin{equation}
\left(\begin{array}{c|c} H & \mathbf{0} \\ \mathbf{0} & H \end{array}\right)
\end{equation}
is orthogonal to each other with respect to the symplectic product. The corresponding stabilizer group constructed from these symplectic vectors can then be used to define an $[[n,2k-n]]$ QECC. The above construction was discovered separately by Calderbank and Shor \cite{CS96} and by Steane \cite{Ste96}, and is called the CSS construction ever since.

A striking feature of QECCs that does not appear in classical codes is that two quantum errors sometimes need not be distinguished in order to correct them. If two errors, $E_1,E_2\in\cG_n$, are related by an element $P$ in the stabilizer group, say $E_1=E_2P$, these two errors have the same effect on the code space $\cC$, and have the same error syndrome. The degenerate effect of a QECC thus calls for a completely different decoding strategy \cite{PC08QLDPC}.

The dual-containing property of the classical binary code in the CSS construction is essential and therefore posts a strict constrain to import those good classical codes. However, if we begin with an arbitrary non-commuting group $\cT\subset\cG_n$, we will show in the following that $\cT$ defines an EAQECC. Without loss of generality, the non-commuting group $\cT$ is generated by a set of canonical generators $\{N_{\bm{\alpha}_1},\cdots,N_{\bm{\alpha}_{s}},N_{\bm{\alpha}_{s+1}},N_{\bm{\beta}_{s+1}},
\cdots,N_{\bm{\alpha}_{s+e}},N_{\bm{\beta}_{s+e}}\}$, for some integers $s$ and $e$, where \cite{BDH06IEEE}
\begin{eqnarray}
\bm{\alpha}_i\odot\bm{\alpha}_j&=&0, \forall i,j \label{symp_basis1}\\
\bm{\beta}_i\odot\bm{\beta}_j&=&0, \forall i,j \label{symp_basis2}\\
\bm{\alpha}_i\odot\bm{\beta}_j&=&0, \forall i\neq j \label{symp_basis3}\\
\bm{\alpha}_i\odot\bm{\beta}_i&=&1, \forall i. \label{symp_basis4}
\end{eqnarray}
With the help of $e$ copies of the maximally entangled state, we can obtain a commuting group $\cT'\subset\cG_{n+e}$ from $\cT$ such that
\begin{multline*}
\cT'=\langle N_{\bm{\alpha}_1}\otimes I_e,\cdots,N_{\bm{\alpha}_{s}}\otimes I_e, N_{\bm{\alpha}_{s+1}}\otimes X_1,N_{\bm{\beta}_{s+1}}\otimes Z_1, \\
\cdots,N_{\bm{\alpha}_{s+e}}\otimes X_e,N_{\bm{\beta}_{s+e}}\otimes Z_e\rangle
\end{multline*}
where $I_e$ denotes $e$ tensor copies of the identity $I$ operator and $X_i$ is an $e$-fold Pauli operator whose $i$-th position is the Pauli $X$ matrix, and identity $I$ for the rest positions. Then $\cT'$ (therefore $\cT$) defines an $[[n,k;e]]$ EAQECC, where $k+s+e=n$ \cite{BDH06,BDH06IEEE,Hsieh08}. The above entanglement-assisted formalism allows us to begin with an arbitrary binary check matrix $H$ in the CSS construction. The following theorem can be used to decide the amount of maximally entangled states in constructing a EAQECC \cite{HBD07,WB08EAQECC,Wilde09EAQECC}.
\begin{theorem}
\label{ebit} Let $H$ be any binary parity check matrix with
dimension $(n-k)\times n$. We can obtain the corresponding
$[[n,2k-n+e;e]]$ EAQECC, where $e = {\rm rank}(H H^T)$ is the number
of ebits needed.
\end{theorem}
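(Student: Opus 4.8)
The plan is to realize the code through the CSS-type (dual) construction inside the entanglement-assisted stabilizer formalism of Ref.~\cite{BDH06,BDH06IEEE}, and then to extract the three parameters of $[[n,2k-n+e;e]]$ from a single rank computation. First I would form the symplectic check matrix attached to $H$: take the $n-k$ rows of $(H\mid 0)$ as the $X$-type generators and the $n-k$ rows of $(0\mid H)$ as the $Z$-type generators, producing $2(n-k)$ Pauli generators on $n$ qubits. Since $H$ is a parity check matrix for a $k$-dimensional code it has full row rank $n-k$, so these $2(n-k)$ vectors are linearly independent in the symplectic space $\bbF_2^{2n}$.

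Next I would record how these generators fail to commute. Under the symplectic form $\langle(x_1\mid z_1),(x_2\mid z_2)\rangle = x_1\cdot z_2 + z_1\cdot x_2$, the Gram matrix of the CSS generator set is the block anti-diagonal matrix
$$\Omega=\begin{pmatrix} 0 & HH^T \\ HH^T & 0\end{pmatrix},$$
whose rank over $\bbF_2$ is $2\,\mathrm{rank}(HH^T)=2e$. The off-diagonal blocks $HH^T$ measure exactly the obstruction to the classical code containing its dual; when $HH^T=0$ we have $\Omega=0$ and recover an ordinary $[[n,2k-n]]$ CSS code needing no entanglement.

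The key step is the structure theorem for a subspace of a symplectic space, realized concretely by the symplectic Gram--Schmidt procedure underlying the entanglement-assisted construction. It says that the span $V$ of the generators splits, after a Clifford transformation, into a non-degenerate part of dimension $\mathrm{rank}(\Omega)=2e$ and its symplectic radical $V\cap V^{\perp}$ of dimension $2(n-k)-2e$. Hence the generating set is equivalent to $e$ anticommuting pairs together with $2(n-k)-2e$ mutually commuting generators. Each anticommuting pair is made to commute on the enlarged system by consuming one pre-shared ebit, which pins the entanglement cost at $c=e=\mathrm{rank}(HH^T)$.

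Finally I would count logical qubits from the canonical form. An $[[n,k';c]]$ entanglement-assisted code carries $n-k'+c$ independent generators on the sender's $n$ qubits ($2c$ arising from the $c$ ebits and $n-k'-c$ from the ancillas). Equating this with our $2(n-k)$ generators and $c=e$ gives $n-k'+e=2(n-k)$, so $k'=2k-n+e$, which is the asserted code. The step demanding the most care is the rank bookkeeping: one must check that the radical dimension $2(n-k)-2e$ is non-negative---guaranteed since $\mathrm{rank}(HH^T)\le n-k$---and, should $H$ fail to be of full row rank, first replace it by an independent subset of its rows before applying the count.
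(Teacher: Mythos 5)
The paper does not actually prove Theorem~\ref{ebit}; it states the result and cites Refs.~[HBD07, WB08EAQECC, Wilde09EAQECC] for it, so there is no in-paper argument to compare against. Your proof is the standard one from those references and is correct: the CSS-type symplectic check matrix has Gram matrix $\left(\begin{smallmatrix} 0 & HH^T \\ HH^T & 0\end{smallmatrix}\right)$ of rank $2e$, the symplectic Gram--Schmidt decomposition yields $e$ hyperbolic pairs plus $2(n-k)-2e$ isotropic generators, each hyperbolic pair costs one ebit, and the generator count $n-k'+e=2(n-k)$ gives $k'=2k-n+e$. The only point deserving care is the one you flag yourself: the statement as written says ``any'' $(n-k)\times n$ binary matrix, and the parameter bookkeeping ($2(n-k)$ independent generators) silently assumes $\operatorname{rank}(H)=n-k$; if $H$ is row-deficient the logical-qubit count must be done with $\operatorname{rank}(H)$ in place of $n-k$, which is consistent with your proposed fix of first discarding dependent rows.
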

Notice that Theorem~\ref{ebit} only provides a general guideline for evaluating the amount of entanglement required. 

\subsection{Entanglement-assisted quantum finite geometry LDPC codes}
Recall that the girth of classical FG-LDPC codes is at least $6$ due to the
geometric structure of finite geometry \cite{KLF01}. This makes the
construction of standard quantum LDPC codes from the classical FG-LDPC codes impossible because the classical FG-LDPC codes do not contain their dual unless necessary modification is made \cite{Aly07QLDPC,HI07QLDPC}. Moreover, even though modification is applied in Ref. \cite{Aly07QLDPC,HI07QLDPC}, it is very likely that the minimum distance of the QECCs will degrade.


The above obstacles can be overcome if we allow entanglement assistance. The EAQECC constructed from a classical FG-LDPC code naturally preserves both the minimum distance and the girth of its classical counterpart. It is conjectured that these benefits come from massive uses of ebits. Furthermore, noiseless entanglement is a valuable resource, and protecting it from the environment requires extra error-correcting power. Therefore, it is desirable to use as small amount of entanglement in EAQECCs as possible.
Previous evidence indicates that the amount of entanglement in EAQECCs might increase linearly with the
code length \cite{HBD08QLDPC}. Here, we illustrate three families of EAQECCs
constructed from classical FG-LDPC codes such that the entanglement consumption
rate decreases as a function of the code length.

Define the {\it entanglement consumption rate} of an $[[n,k;e]]$ EAQECC to be
$e/n$. The first example follows from classical type-I 2-dimensional EG-LDPC codes
over Euclidean geometry $\text{EG}(2,2^s)$. Such type-I 2-D EG-LDPC code is an
$[n,k,d]$ linear code where $n=2^{2s}-1$, $n-k=3^s-1$, and $d=2^s+1$.
Furthermore, the parity check matrix $\bH_{\text{EG}(2,2^s)}^{(1)}$ has both
row weight $L$ and column weight $J$ equal to $2^s$ \cite{KLF01}.
\begin{theorem}
The rank of $\bH_{\emph{EG}(2,2^s)}^{(1)}(\bH_{\emph{EG}(2,2^s)}^{(1)})^T$ is
equal to $2^s$. \label{THM_EG_RANK}
\end{theorem}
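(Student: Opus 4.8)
Set $q=2^s$ and $\bH:=\bH_{\EG{2}{q}}^{(1)}$, an $n\times n$ matrix with $n=q^2-1$. The plan is to bypass the cyclic-code machinery entirely and compute $\bH\bH^T$ combinatorially. First I would observe that for $p=2$ the number of cyclic classes is $(q^{p-1}-1)/(q-1)=1$, so $\bH=H_1$ is a single circulant whose $n$ rows are precisely the incidence vectors of all $n=q^2-1$ lines of $\EG{2}{q}$ not passing through $\b0$, each occurring exactly once. Consequently the $(i,j)$ entry of $\bH\bH^T$ over $\GF{2}$ is the parity of $|\ell_i\cap\ell_j|$, the number of common points of the lines indexed by rows $i$ and $j$.

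The next step is to read these intersection numbers off the geometry. A line has $q=2^s$ points, so each diagonal entry is $q\equiv 0\pmod 2$. For $i\neq j$, two distinct lines are either parallel (meeting in $0$ points) or meet in exactly one point. Hence, over $\GF{2}$, $\bH\bH^T=J+I+B$, where $J$ is all-ones, $I$ is the identity, and $B$ is the adjacency matrix of the parallelism relation among the non-origin lines: one checks the diagonal is $1+1+0=0$ and an off-diagonal entry equals $1$ exactly when the two lines are non-parallel.

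Now I would exploit that parallelism is an equivalence relation. By property (3) each line has $q-1$ parallels, so the $q(q+1)$ lines of $\EG{2}{q}$ fall into classes of size $q$, i.e.\ $q+1$ classes; since every class contains exactly one line through $\b0$, each contributes $q-1$ of the non-origin lines. Ordering the rows by parallel class gives the block forms $B=I_{q+1}\otimes(J_{q-1}-I_{q-1})$, $J=J_{q+1}\otimes J_{q-1}$ and $I=I_{q+1}\otimes I_{q-1}$; substituting and simplifying in characteristic $2$ collapses the three terms to $\bH\bH^T=(J_{q+1}+I_{q+1})\otimes J_{q-1}$.

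Finally the rank follows from $\mathrm{rank}(X\otimes Y)=\mathrm{rank}(X)\,\mathrm{rank}(Y)$. Over $\GF{2}$ one has $\mathrm{rank}(J_{q-1})=1$, while for $A=J_{q+1}+I_{q+1}$ the oddness of $q+1$ gives $A^2=J^2+I=(q+1)J+I=J+I=A$, so $A$ is idempotent with kernel exactly the span of the all-ones vector, whence $\mathrm{rank}(A)=(q+1)-1=q$. Therefore $\mathrm{rank}(\bH\bH^T)=q\cdot 1=2^s$, as claimed. The only genuinely delicate point is the reduction carried out in the second and third steps---recognizing that this Gram matrix is governed entirely by the parallel-class partition; once the Kronecker form is in hand the rank is routine linear algebra over $\GF{2}$, with the parity of $q+1$ being the sole place the field characteristic enters.
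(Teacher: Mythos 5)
Your proof is correct and follows essentially the same route as the paper: both arguments reduce $\bH_{\EG{2}{2^s}}^{(1)}(\bH_{\EG{2}{2^s}}^{(1)})^T$ to the same block matrix (zero blocks on the diagonal, all-ones blocks off the diagonal, indexed by the $2^s+1$ parallel classes of the $2^{2s}-1$ non-origin lines) via the same three intersection cases. Your Kronecker-product packaging $(J_{q+1}+I_{q+1})\otimes J_{q-1}$ and the explicit kernel computation for $J_{q+1}+I_{q+1}$ merely supply the final rank evaluation over $\GF{2}$ that the paper asserts without detail.
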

\begin{IEEEproof}
Denote $\overline{\text{EG}}(2,2^s)$ to be the Euclidean geometry
$\text{EG}(2,2^s)$ where both the origin and the lines passing through it are
excluded. Then $\overline{\text{EG}}(2,2^s)$ contains $2^{2s}-1$ points and
$2^{2s}-1$ lines. Recall the definition of a line in Euclidean geometry
$\text{EG}(2,2^s)$ from Section~\ref{II}. Any line in $\overline{\text{EG}}(2,
2^s)$ induces a partition of $\overline{\text{EG}}(2, 2^s)$ into $2^s+1$ sets,
where each set $S_{i}$, $i=1,2,\cdots,2^s+1$, contains lines parallel to each
other. It is also easy to verify that the size of each $S_i$ is $2^s-1$. We
consider the following three cases:
\begin{enumerate}
\item Recall that the number of points on a line is $2^s$. Therefore, the
    overlapping of the number of ``ones'' in the incidence vector with
    itself is even, and the inner product of an incidence vector with
    itself is zero.
\item Since two different lines in the same set are parallel to each other,
    the overlapping of the number of ``ones'' in these two incidence
    vectors is zero. The inner product of these two incidence vectors is
    zero.
\item Since two arbitrary different lines in two different sets intersect
    at only one point, the overlapping of the number of ``ones'' in these
    two incidence vectors is one. The inner product of these two incidence
    vectors is one.
\end{enumerate}
Since the rows of $\bH_{\text{EG}(2,2^s)}^{(1)}$ come from all the incidence
vectors of those lines in $\overline{\text{EG}}(2,2^s)$, we can arrange the
rows in the order of the lines in $S_i$, where $i$ starts from 1 to $2^s+1$.
Then the matrix $\bH_{\text{EG}(2,2^s)}^{(1)}(\bH_{\text{EG}(2,2^s)}^{(1)})^T$
consists of $(2^s+1)\times(2^s+1)$ submatrices:
\begin{equation*}
\left(\begin{array}{cccc}
\b0 & \mathbf{1} & \cdots & \mathbf{1} \\
\mathbf{1}& \b0 & & \mathbf{1} \\
\vdots& & \ddots& \vdots\\
\mathbf{1} & \cdots& & \b0
\end{array}\right),
\end{equation*}
where each $\b0$ or $\mathbf{1}$ represents an all-zeros or all-ones matrix of
size $(2^s-1)\times(2^s-1)$, respectively. The rank of
$\bH_{\text{EG}(2,2^s)}^{(1)}(\bH_{\text{EG}(2,2^s)}^{(1)})^T$ is then equal to
$2^s$.
\end{IEEEproof}

Table~\ref{EG_EAQECC_TABLE} lists a set of $[[n,2k-n+e,d;e]]$ EAQECCs
\cite{BDH06,BDH06IEEE,Hsieh08} constructed from the classical type-I 2-D
EG-LDPC code whose parity check matrix $\bH_{\text{EG}(2,2^s)}^{(1)}$ has row
weight $L$ and column weight $J$. The entanglement consumption rate in this
case is
\begin{equation}
\frac{e}{n}=\frac{2^s}{2^{2s}-1}\approx\frac{1}{\sqrt{n}},
\end{equation}
which decreases approximately equal to $1/\sqrt{n}$.
\begin{table}[h]
\caption{Each row represents an $[[n,2k-n+e,d;e]]$ EAQECC, respectively, that
is constructed from the classical type-I 2-D EG-LDPC code whose parity check
matrix $\bH_{\text{EG}(2,2^s)}^{(1)}$ has row weight $L$ and column weight
$J$.}
\centering
\begin{tabular}{|ccccccc|}
  \hline
  $s$ & $n$     &  $k$     & $d$   & $L$   & $J$   & $e$  \\ \hline
  2 & 15    & 7     & 5   & 4   & 4   & 4 \\
  3 & 63    & 37    & 9   & 8   & 8   & 8 \\
  4 & 255   & 175   & 17  & 16  & 16  & 16 \\
  5 & 1023  & 781   & 33  & 32  & 32  & 32 \\
  6 & 4095  & 3367  & 65  & 64  & 64  & 64 \\
  7 & 16383 & 14197 & 129 & 128 & 128 & 128 \\
  \hline
\end{tabular}
\label{EG_EAQECC_TABLE}
\end{table}

The second example follows from classical type-I 2-dimensional PG-LDPC codes
over projective geometry $\text{PG}(2,2^s)$. Such type-I 2-D PG-LDPC code is an
$[n,k,d]$ linear code where $n=2^{2s}+2^s+1$, $n-k=3^s-1$, and $d=2^s+2$.
Furthermore, the parity check matrix $\bH_{\text{EG}(2,2^s)}^{(1)}$ has both
row weight $L$ and column weight $J$ equal to $2^s+1$ \cite{KLF01}.
\begin{theorem}
The rank of $\bH_{\emph{PG}(2,2^s)}^{(1)}(\bH_{\emph{PG}(2,2^s)}^{(1)})^T$ is
equal to $1$, $\forall s\in\bbZ^+$.
\end{theorem}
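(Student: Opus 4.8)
The plan is to compute the symmetric product $\bH_{\text{PG}(2,2^s)}^{(1)}(\bH_{\text{PG}(2,2^s)}^{(1)})^T$ over $\bbF_2$ entry by entry, exactly as in the proof of Theorem~\ref{THM_EG_RANK}, and then to recognize the result as a matrix of rank one. Write $\bH\equiv\bH_{\text{PG}(2,2^s)}^{(1)}$, whose rows are the incidence vectors $\bv_{\ell_1},\dots,\bv_{\ell_m}$ of the $m=2^{2s}+2^s+1$ lines of $\text{PG}(2,2^s)$. The $(i,j)$ entry of $\bH\bH^T$ is the inner product $\inner{\bv_{\ell_i}}{\bv_{\ell_j}}$ taken over $\bbF_2$, which by definition equals the number of points lying on both $\ell_i$ and $\ell_j$, reduced modulo $2$. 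So the whole problem reduces to reading off, for each pair of lines, the parity of the size of their intersection.

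Whereas the Euclidean argument required three cases, here only two arise. On the diagonal ($i=j$) the self inner product counts the points on a single line; by property~(1) of $\text{PG}(2,2^s)$ in Section~\ref{II} each line carries $q+1=2^s+1$ points, which is \emph{odd} for every $s\in\bbZ^+$, so every diagonal entry is $1$. Off the diagonal ($i\neq j$) the defining incidence property of a projective plane---any two distinct lines meet in exactly one point---forces the overlap to be $1$, so every off-diagonal entry is also $1$. Hence $\bH\bH^T$ is the $m\times m$ all-ones matrix over $\bbF_2$. Since all of its rows are identical, its row space is one-dimensional and its rank is $1$, which is the assertion.

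I expect no serious obstacle in this argument; the only genuine point to verify is the parity of $q+1=2^s+1$, and since $q=2^s$ is even this is automatically odd, which is precisely what keeps the diagonal entries equal to $1$ and prevents any cancellation. The conceptual content worth flagging is the contrast with the Euclidean case: $\text{PG}(2,2^s)$ has \emph{no} parallel lines, so the partition into parallel classes that produced the $\b0$ blocks in the proof of Theorem~\ref{THM_EG_RANK} simply does not occur. Consequently the block structure collapses and the entire matrix becomes all-ones rather than a rank-$2^s$ pattern, explaining why the ebit count here drops all the way to a single pre-shared pair.
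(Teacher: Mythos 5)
Your argument is correct and is essentially identical to the paper's own proof: both compute the entries of $\bH\bH^T$ over $\bbF_2$ by noting that each line of $\text{PG}(2,2^s)$ contains an odd number $2^s+1$ of points (giving diagonal entries $1$) and that any two distinct lines meet in exactly one point (giving off-diagonal entries $1$), so the product is the all-ones matrix of rank $1$. The contrast you draw with the Euclidean case (no parallel classes, hence no $\b0$ blocks) is a nice additional observation but not part of the paper's proof.
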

\begin{IEEEproof}
Recall that  $\text{PG}(2,2^s)$ contains $2^{2s}+2^s+1$ points and
$2^{2s}+2^s+1$ lines, and every line intersects with another one at exactly one
point (no parallel lines in $\text{PG}(p,q)$). The overlapping of the number of
``ones'' in these two incidence vectors is one. Therefore, the inner product of
these two incidence vectors is one. Furthermore, the number of points on a line
is $2^s+1$, the overlapping of the number of ``ones'' in the incidence vector
with itself is odd. The inner product of the incidence vector with itself is
one.

Since the rows of $\bH_{\text{PG}(2,2^s)}^{(1)}$ come from all the incidence
vectors of those lines in $\text{PG}(2,2^s)$, the matrix
$\bH_{\text{PG}(2,2^s)}^{(1)}(\bH_{\text{PG}(2,2^s)}^{(1)})^T$ is an all-one
matrix. The rank of
$\bH_{\text{PG}(2,2^s)}^{(1)}(\bH_{\text{PG}(2,2^s)}^{(1)})^T$ is then equal to
$1$.
\end{IEEEproof}

Table~\ref{TABLE_PG_EAQECC} lists a set of $[[n,2k-n+e,d;e]]$ EAQECCs
constructed from the classical type-I 2-D PG-LDPC code whose parity check
matrix $\bH_{\text{PG}(2,2^s)}^{(1)}$ has row weight $L$ and column weight $J$.
The entanglement consumption rate in this case is
\begin{equation}
\frac{e}{n}=\frac{1}{2^{2s}+2^s+1}=\frac{1}{n},
\end{equation}
which decreases linearly with respect to $n$.
\begin{table}[h]
\caption{Each row represents an $[[n,2k-n+e,d;e]]$ EAQECC, respectively, that
is constructed from the classical type-I 2-D PG-LDPC code whose parity check
matrix $\bH_{\text{PG}(2,2^s)}^{(1)}$ has row weight $L$ and column weight
$J$.}
\centering
\begin{tabular}{|ccccccc|}
  \hline
  $s$ & $n$     &  $k$     & $d$   & $L$   & $J$   & $e$  \\ \hline
  2 & 21    & 11    & 6   & 5   & 5   & 1 \\
  3 & 73    & 45    & 10  & 9   & 9   & 1 \\
  4 & 273   & 191   & 18  & 17  & 17  & 1 \\
  5 & 1057  & 813   & 34  & 33  & 33  & 1 \\
  6 & 4161  & 3431  & 66  & 65  & 66  & 1 \\
  7 & 16513 & 14326 & 130 & 129 & 129 & 1 \\
  \hline
\end{tabular}
\label{TABLE_PG_EAQECC}
\end{table}

The third example follows from classical type-II 3-dimensional PG-LDPC codes
over projective geometry $\text{PG}(3,q)$.
\begin{theorem}
The rank of $\bH_{\emph{PG}(3,q)}^{(2)}(\bH_{\emph{PG}(3,q)}^{(2)})^T$ is equal
to $1$, for every integer $q\geq2$.
\end{theorem}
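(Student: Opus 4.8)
The plan is to follow the template of the preceding theorem for $\text{PG}(2,2^s)$, but with the roles of points and lines interchanged and with $\text{PG}(3,q)$ in place of $\text{PG}(2,2^s)$. Since $\bH_{\text{PG}(3,q)}^{(2)}=(\bH_{\text{PG}(3,q)}^{(1)})^T$ is the point--line incidence matrix (rows indexed by the points, columns by the lines), the $(i,j)$ entry of $M\equiv\bH_{\text{PG}(3,q)}^{(2)}(\bH_{\text{PG}(3,q)}^{(2)})^T$, computed over $\GF{2}$, is the number of lines passing through both the $i$-th and the $j$-th point, reduced modulo $2$; equivalently, it is the $\GF{2}$ inner product of the $i$-th and $j$-th rows of $\bH_{\text{PG}(3,q)}^{(2)}$. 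I would therefore show that $M$ is the all-ones matrix and then invoke the fact (already used in the $\text{PG}(2,2^s)$ case) that a square all-ones matrix has rank $1$.

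First I would treat the off-diagonal entries with $i\neq j$. By the defining axiom of a projective geometry stated in Section~\ref{II}, any two distinct points are joined by exactly one line, so the $i$-th and $j$-th rows of $\bH_{\text{PG}(3,q)}^{(2)}$ overlap in exactly one coordinate. Hence every off-diagonal entry of $M$ equals $1$.

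Next I would treat the diagonal entries. The entry $(i,i)$ is the number of lines through a single point, taken modulo $2$; this equals the row weight $L$ of $\bH_{\text{PG}(3,q)}^{(2)}$, which by property (2) of $\text{PG}(p,q)$ with $p=3$ is $L=(q^3-1)/(q-1)=q^2+q+1$. The essential observation is that $q^2+q+1$ is odd for \emph{every} integer $q$: writing $q^2+q+1=q(q+1)+1$ and noting that the product $q(q+1)$ of two consecutive integers is always even shows the diagonal entries are all $1$. This parity check is exactly where the hypothesis ``for every integer $q\geq 2$'' is genuinely needed---in contrast to the $\text{PG}(2,2^s)$ case, here $q$ need not be even, so the oddness of the point-degree must be verified directly rather than read off from $q$ being a power of two.

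Combining the two cases, $M$ is the all-ones matrix over $\GF{2}$, all of whose rows coincide, so $\mathrm{rank}(M)=1$. I do not anticipate any real obstacle: the argument reduces to two incidence facts already recorded for $\text{PG}(3,q)$ together with the elementary parity of $q^2+q+1$, the latter being the only point that requires a moment's care.
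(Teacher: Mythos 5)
Your proposal is correct and follows essentially the same route as the paper's proof: off-diagonal entries are $1$ because two distinct points lie on exactly one line, diagonal entries are $1$ because the row weight $q^2+q+1$ is odd, and the all-ones matrix has rank $1$. In fact your version is slightly more careful than the paper's, which misdescribes $L=q^2+q+1$ as the number of points on a line rather than the number of lines through a point, and which asserts the oddness of $L$ without your explicit $q(q+1)$ parity observation.
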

\begin{IEEEproof}
Here, we consider a 3-dimensional projective geometry over GF$(q)$. Recall that
each line in $\text{PG}(3,q)$ contains $L=q^2+q+1$ points, where $L$ is odd for
$q\geq 2$. Therefore, the inner product of the row vector with itself is one.
Furthermore, two different points are connected by exactly one line. Therefore,
the overlapping of the number of ``ones'' in arbitrary two rows is one. The
inner product of these two incidence vectors is one. Therefore the matrix
$\bH_{\text{PG}(3,q)}^{(2)}(\bH_{\text{PG}(3,q)}^{(2)})^T$ is an all-one
matrix. The rank of $\bH_{\text{PG}(3,q)}^{(2)}(\bH_{\text{PG}(3,q)}^{(2)})^T$
is then equal to $1$.
\end{IEEEproof}

Table~\ref{TABLE_PG_EAQECC2} lists a set of $[[n,2k-n,d;e]]$ EAQECCs
constructed from the classical type-II 3-D PG-LDPC code whose parity check
matrix $\bH_{\text{PG}(3,q)}^{(2)}$ has row weigh $L$ and column weigh $J$.
Again the construction uses the ``generalized CSS construction'' proposed in
Ref.~\cite{BDH06,BDH06IEEE}. The entanglement consumption rate in this case is
\begin{equation}
\frac{e}{n}=\frac{q+1}{(1+q+q^2)(1+q+q^2+q^3)}=\frac{1}{n},
\end{equation}
which decreases linearly with respect to $n$.
\begin{table}[htbp]
\caption{Each row represents an $[[n,2k-n+e,d;e]]$ EAQECC, respectively, that
is constructed from the classical type-II 3-D PG-LDPC code whose parity check
matrix $\bH_{\text{PG}(3,q)}^{(2)}$ has row weight $L$ and column weight $J$.}
\centering
\begin{tabular}{|ccccccc|}
  \hline
  $q$ & $n$   & $k$   & $d$ & $L$ & $J$  & $e$  \\ \hline
  2   & 35    & 24    & 4   & 7   & 3    & 1 \\
  3   & 130   & 91    & 5   & 13  & 4    & 1 \\
  4   & 357   & 296   & 6   & 21  & 5    & 1 \\
  5   & 806   & 651   & 7   & 31  & 6    & 1 \\
  6   & 2850  & 2451  & 8   & 43  & 7    & 1 \\
  7   & 4745  & 4344  & 9   & 57  & 8    & 1 \\
  \hline
\end{tabular}
\label{TABLE_PG_EAQECC2}\end{table}

The investigation of the entanglement-assisted FG-LDPC codes also allows one to
construct high performance standard QECCs. The basic idea is simple. Instead of
sharing entanglement with the receiver Bob, Alice prepares $e$ ebits locally.
With $e$ halves she encodes using some $[[n,k,d;e]]$ entanglement-assisted
FG-LDPC codes. The other $e$ halves she encodes using a good simple standard
$[[n',e,d]]$ QECC. Putting both blocks together, she has an $[[n+n',k,d]]$
standard QECC \cite{BDH06IEEE}. The entanglement-assisted FG-LDPC code is an
excellent candidate for such construction, since the amount of entanglement
needed either grows slowly with $n$ or not at all, and with $e \ll n$, this
extra block likely has little effect on the overall code performance.
Therefore, having high performance entanglement-assisted LDPC codes with low
entanglement consumption rates implies that one can construct high-performance
standard QECCs with very similar parameters. Moreover, these codes are likely
to work much better than self-dual LDPC codes because they do not have
4-cycles.

\section{Performance}\label{IV}
In this section, we provide simulation results (in terms of block error rate)
of the entanglement-assisted FG-LDPC codes over the depolarizing channel, which
creates $X$ errors, $Y$ errors, and $Z$ errors with equal probability $f_m$.
Moreover, we focus on those entanglement-assisted FG-LDPC codes with a low
entanglement consumption rate. The decoding algorithm used in the simulation is
the sum-product decoding algorithm. For simplicity, we omit the introduction of
this decoding algorithm, and point the interested reader to Refs.
\cite{mackay99good,PC08QLDPC}.

Fig.~\ref{fig1} shows that the block error probability performance of EG(2,32)
is better than EG(2,16), and the block error probability performance of
EG(2,16) is better than EG(2,8) when the cross over probability $f_m$ is small
($f_m < 0.015$). A similar result holds for quantum PG-LDPC codes. However, the
block error probability performance for shorter code length is better when the
cross error probability is large. The reason for this might be because in the
quantum setting, the transmitted quantum information cannot be retrieved even
when the whole block contains just one uncorrectable error. In this sense,
using quantum code with large block in the very noisy channel might not be helpful, unlike in the classical setting.
\begin{figure}[htbp]
\includegraphics[width=0.5\textwidth]{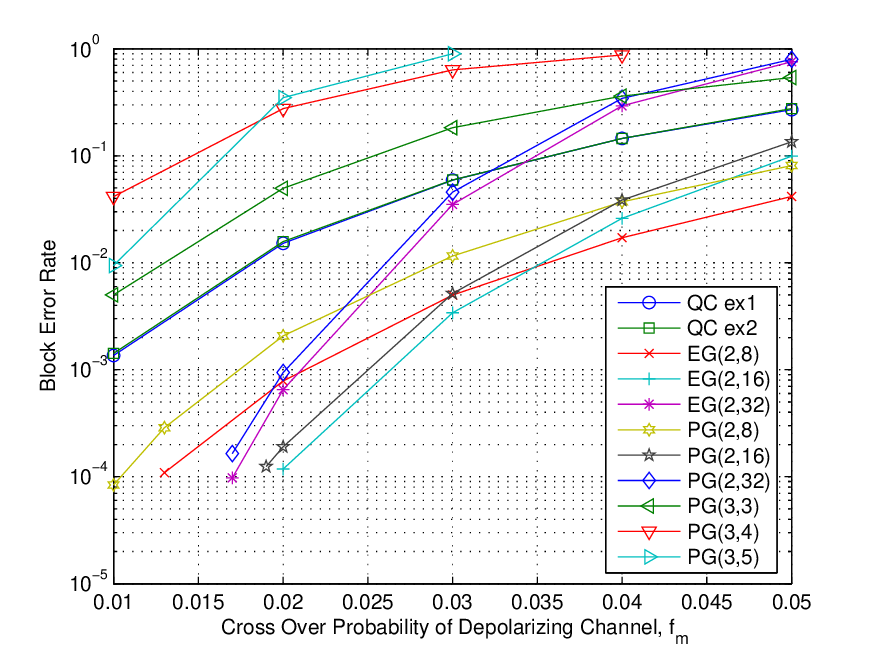}
\caption{(Color online). Block error probability performance of entanglement-assisted FG-LDPC codes with sum-product algorithm (SPA) decoding, and 100 iterations for each date point. ``QC ex1'' and ``QC ex2'' are the  entanglement-assisted quasi-cyclic LDPC codes constructed in Ref.~\cite{HBD08QLDPC}, respectively.}
\label{fig1}
\end{figure}

The authors in \cite{HBD08QLDPC} investigated the block error probability
performance of entanglement-assisted quantum quasi-cyclic LDPC codes, and
showed that their EAQECCs outperform some existing quantum stabilizer codes
with similar {\it net rate}, where the net rate of an $[[n,k;e]]$ EAQECC is
defined to be $(k-e)/n$. Surprisingly, the 2-D entanglement-assisted FG-LDPC
codes perform much better than their constructed examples. Moreover, the
consumed pure entanglement in constructing the entanglement-assisted FG-LDPC
codes is much less than theirs.

The authors in \cite{HI07QLDPC} proposed a construction of a pair of quasi-cyclic
LDPC codes to construct a quantum error correction code. We numerically simulate two such quantum quasi-cyclic LDPC codes with the following parameters defined in Theorem 2.4 in Ref.~\cite{HI07QLDPC}:
\begin{table}[h]\centering
\begin{tabular}{|c|c|c|}\hline
  & C=$(J,L,P,\sigma,\tau)$ & D=$(K,L,P,\sigma,\tau)$\\ \hline
 Code A $[[n=1168, k=590]]$& (4,16,73,10,2) & (4,16,73,10,2)\\ \hline
 Code B $[[n=1204,k=614]]$& (7,28,43,8,6) & (7,28,43,8,6)\\
  \hline
\end{tabular}
\end{table}

We compare their block error probability performance with EG(2,32) and PG(2,32) in Fig.~\ref{FIG_HI} since they have similar code length. Even though there is no 4-cycle in these two quantum quasi-cyclic LDPC codes, these quantum codes likely contain many low-weight codewords. On the other hand, the minimum distance of EG(2,32) and PG(2,32) is guaranteed from its geometric construction.
\begin{figure}[htbp]
\includegraphics[width=0.5\textwidth]{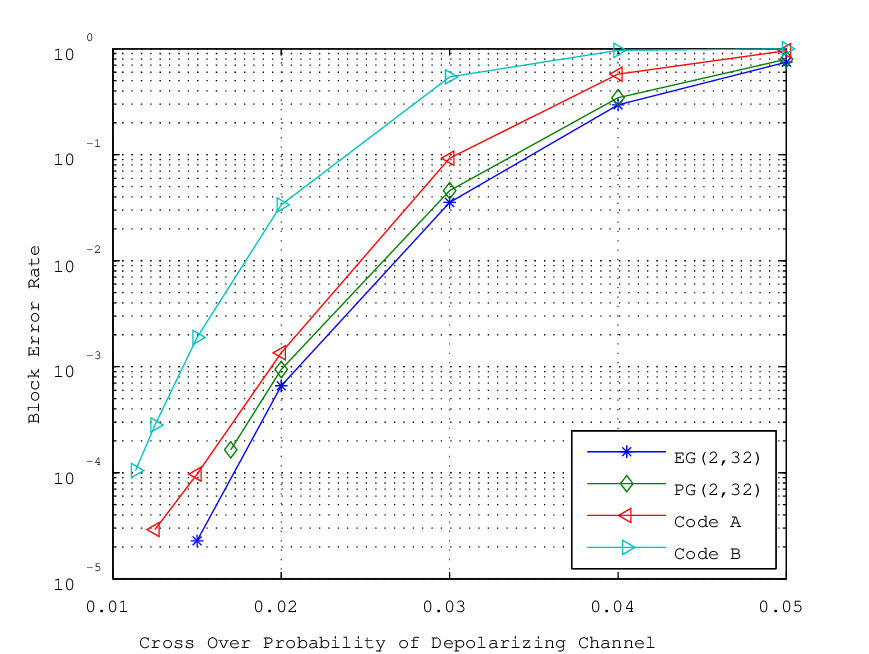}
\caption{(Color online). Block error probability performance of EG(2,32) and PG(2,32) constructed in this paper and Code A and Code B constructed using methods in
Ref.~\cite{HI07QLDPC}. SPA decoding with 100 iterations is used.}
\label{FIG_HI}
\end{figure}

The author in \cite{Aly07QLDPC} constructed a class of stabilizer quantum LDPC
codes from the Euclidean geometry. By adding a column of ``ones'' or a column
of ``ones'' together with an identity matrix to the classical EG-LDPC, the
resulting density matrix is self-dual. Similar technique can be applied to the
classical PG-LDPC codes. Therefore, the stabilizer quantum LDPC codes can be
constructed by the CSS construction method \cite{NC00}. As pointed out in
\cite{Aly07QLDPC}, these quantum LDPC codes contain only one cycle of length
four. We compare the block error probability performance of the stabilizer
quantum EG-LDPC codes constructed in \cite{Aly07QLDPC} with the
entanglement-assisted EG-LDPC codes proposed in this article in
Fig.~\ref{EG_ALY}. Even though there is only one cycle of length four in the conventional
quantum EG-LDPC codes, its performance is much worse than the
entanglement-assisted EG-LDPC codes. Fig.~\ref{PG_ALY} shows similar results
for the stabilizer quantum PG-LDPC codes and the entanglement-assisted PG-LDPC
codes. The degradation of the block error probability performance of the
stabilizer quantum FG-LDPC codes mainly comes from the single 4-cycle. Our
simulation shows that whenever errors occur on those qubits in that 4-cycle,
they are unlikely to be corrected by SPA decoding.

\begin{figure}[htbp]
\includegraphics[width=0.5\textwidth]{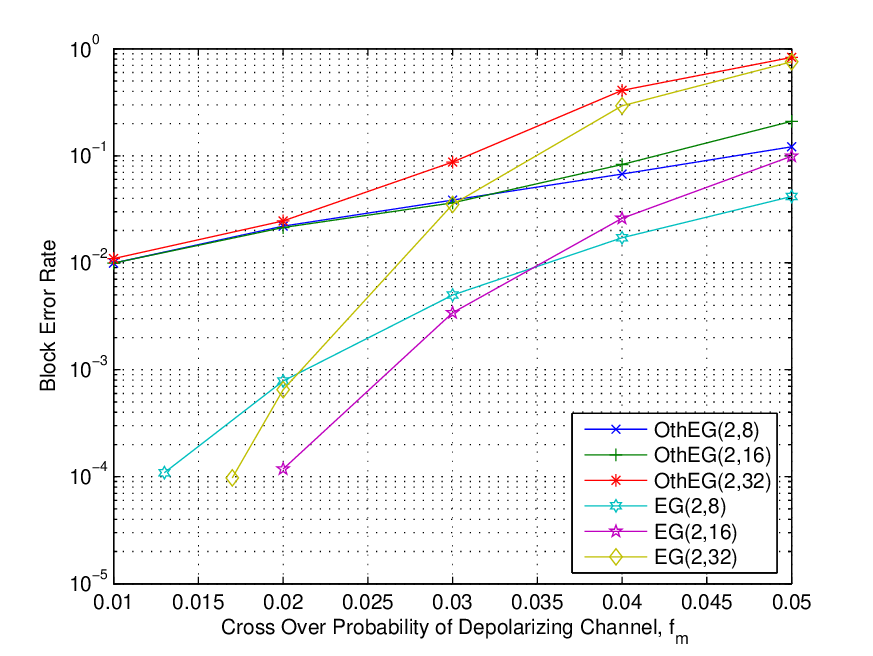}
\caption{(Color online).
We compare the block error probability performance of stabilizer quantum EG-LDPC codes (labeled by OthEG(p,q))
constructed in \cite{Aly07QLDPC} and the entanglement-assisted EG-LDPC codes (labeled by EG(p,q)) proposed in this article.
The SPA decoding algorithm is used in both simulations with 100 iterations for
each date point.} \label{EG_ALY}
\end{figure}

\begin{figure}[htbp]
\includegraphics[width=0.5\textwidth]{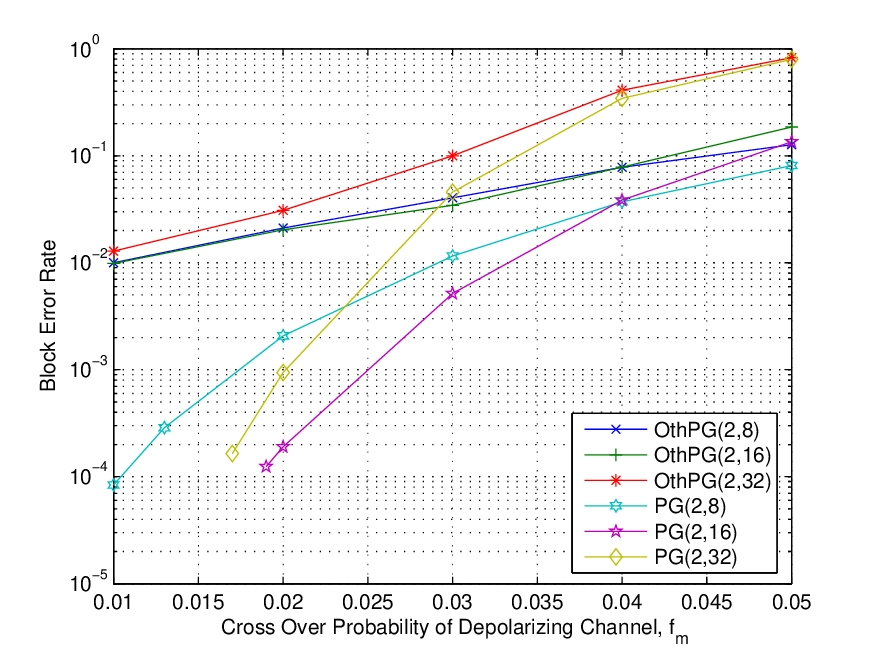}
\caption{(Color online).
We compare the block error probability performance of stabilizer quantum PG-LDPC codes (labeled by OthPG(p,q))
constructed in \cite{Aly07QLDPC} and the entanglement-assisted PG-LDPC codes (labeled by PG(p,q)) proposed in this article.
The SPA decoding algorithm is used in both simulations with 100 iterations for
each date point.}
\label{PG_ALY}
\end{figure}

Next, we modify the sum-product decoding algorithm according to the heuristic
methods proposed in \cite{PC08QLDPC}. Those modifications are intended to
overcome the ignorance of the degeneracy in the decoding. However, our
simulation shows that those modifications do not help to improve the
performance of decoding the entanglement-assisted FG-LDPC codes. For example,
in Fig. \ref{fig2}, we show the performance of the SPA decoding with random
perturbation (see Ref.~\cite{PC08QLDPC} for further detail) is the same as that
of no random perturbation for the EG(2,8) EAQECC. The reason for such result is
because the degeneracy effect is mild. Those low weight errors are not likely
to be inside the code space due to the large minimum distance property of the
FG-LDPC codes.

\begin{figure}[htbp]
    \includegraphics[width=0.5\textwidth]{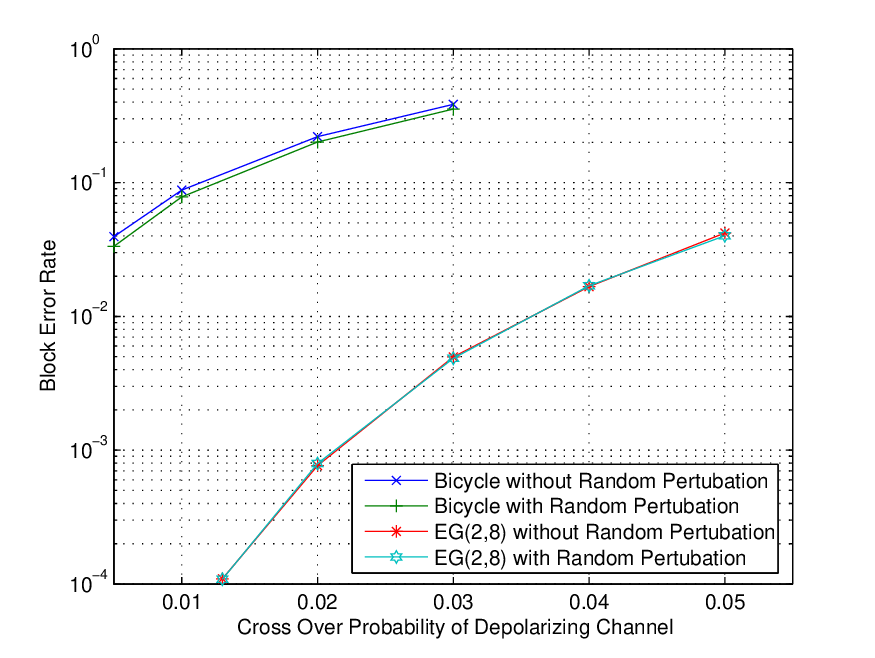}
\caption{(Color online). Performance of quantum FG-LDPC codes with modified SPA
decoding. The maximum number of iterations for the SPA decoding is 360, and the number of iterations between each perturbation is 60.
The strength of the random perturbation is 0.1. The stabilizer code constructed by the bicycle technique (see Ref.~\cite{MMM04QLDPC})
encodes 30 logical qubits in 60 physical qubits. }
  \label{fig2}
\end{figure}

\section{Conclusion}\label{V}
In this paper, we construct families of EAQECCs from finite geometries such
that the block error probability performance of these codes is relatively
better than those proposed in the literature so far. The improvement largely
comes from lack of cycles of length 4 of the constructed FG-LDPC codes due to
the geometric structure. Furthermore, we can overcome the problem of
maintaining large amount of pure maximally entangled states in constructing
EAQECCs by providing families of EAQECCs with an exponentially decreasing
entanglement consumption rate.

The degeneracy effect of the entanglement-assisted FG-LDPC codes is mild
because low weight errors are unlikely to be codewords due to the guaranteed
large minimum distance of the FG-LDPC codes. Therefore, we do not need to
modify the sum-product decoding algorithm which would largely increase the
decoding complexity. However, we believe that new decoding technique that
incorporates the coset construct of the quantum codes deserves further
investigation.

Having high performance entanglement-assisted LDPC codes with low entanglement
consumption rates implies that one can construct high-performance standard
QECCs with very similar parameters. This is because we can use a short and
simple stabilizer code to encode the other halves of entanglement, so long as
it has high enough distance and is easy to decode. This extra block stabilizer
code has little effect on the overall code performance. Moreover, these codes
are likely to work much better than self-dual LDPC codes because they do not
have 4-cycles.

\section*{Acknowledgments}
The author MHH thanks Todd Brun and Mark M. Wilde for their useful suggestions
and comments on the draft. The authors LYH and YWT acknowledge support from
National Science Council of the Republic of China under Contract No.
NSC.96-2112-M-033-007-MY3.
\bibliographystyle{unsrt}
\bibliography{../../Ref}

\end{document}